\documentclass[twocolumn,english,aps]{revtex4}
\usepackage[T1]{fontenc}
\usepackage[latin9]{inputenc}
\usepackage{xcolor}
\usepackage{amsthm}
\usepackage{amsmath}
\usepackage{amssymb}
\PassOptionsToPackage{normalem}{ulem}
\usepackage{ulem}

\makeatletter

\providecolor{lyxadded}{rgb}{0,0,1}
\providecolor{lyxdeleted}{rgb}{1,0,0}

\@ifundefined{textcolor}{}
{%
 \definecolor{BLACK}{gray}{0}
 \definecolor{WHITE}{gray}{1}
 \definecolor{RED}{rgb}{1,0,0}
 \definecolor{GREEN}{rgb}{0,1,0}
 \definecolor{BLUE}{rgb}{0,0,1}
 \definecolor{CYAN}{cmyk}{1,0,0,0}
 \definecolor{MAGENTA}{cmyk}{0,1,0,0}
 \definecolor{YELLOW}{cmyk}{0,0,1,0}
}
\theoremstyle{plain}
\newtheorem{thm}{\protect\theoremname}

\makeatother

\usepackage{babel}
\providecommand{\theoremname}{Theorem}

\begin{document}

\title{Quantum Control of Infinite Dimensional Many-Body Systems}

\author{Roger S. Bliss and Daniel Burgarth }

\affiliation{Institute of Mathematics, Physics and Computer Science, Aberystyth
University, Penglais Campus, SY23 2BZ Aberystwyth, United Kingdom}
\begin{abstract}
A major challenge to the control of infinite dimensional quantum systems
is the irreversibility which is often present in the system dynamics.
Here we consider systems with discrete-spectrum Hamiltonians operating
over a Schwartz space domain, and show that by utilizing the implications
of the Quantum Recurrence Theorem this irreversibility may be overcome,
in the case of individual states more generally, but also in certain
specified cases over larger subsets of the Hilbert space. We discuss
briefly the possibility of using these results in the control of infinite
dimensional coupled harmonic oscillators, and also draw attention
to some of the issues and open questions arising from this and related
work.
\end{abstract}
\maketitle

\section{Introduction}

One of the main goals of control theory in quantum mechanics is to
characterize the achievable dynamics of a system in relation to a
given set of control fields. While such \emph{reachable sets} are
well understood in the case of finite dimensional systems~\cite{DD08},
the same cannot be said about infinite dimensional systems. Apart
from the usual mathematical problems of infinite dimensional systems,
such as unbounded operators and the inequivalence of norms, the main
\emph{physical} obstacle is that infinite dimensional systems generally
show some form of \emph{irreversible} dynamics, which means they are
not controllable in general; and this causes further problems on a
mathematical level because the systems' reachable sets cannot be captured
\emph{algebraically} (that is, by looking at short-term dynamics only).
A typical way this obstacle manifests itself is in the \emph{drift
Hamiltonian, }which is the part of the system interaction which is
always present, independent of the applied controls, and which as
such can be the cause of irreversible dynamics.

From a physics perspective, the most relevant infinite dimensional
systems are those with position (and potentially spin) degrees of
freedom. In this case, it was shown in the seminal work by Braunstein
and Lloyd~\cite{SL99} that in the absence of a drift Hamiltonian,
any Hamiltonian can effectively be reached by switching between various
quadratic Hamiltonians and a single generic higher order one. Specific
simple periodic drift Hamiltonians were also considered. The work
in~\cite{SL99} was put in a rigorous mathematical framework in~\cite{RW06}.
Braunstein's result is very suitable for quantum optics, but does
not apply to systems with a constant many-body interactions. This
is for instance the case in systems of nano-mecanical oscillators~\cite{MA13}
and in solid state sytems described by Bose-Hubbard interactions~\cite{CB05}.
For these systems, specific examples of quantum control were developed
by~\cite{SM12} using the Baker-Campbell-Hausdorff formula and extensive
numerical analysis, but no general results regarding reachability
were given.

Recently, we developed a framework for quantum control of positive
definite quadratic Hamiltonians in the presence of drift Hamiltonians~\cite{MG12}.
Our analysis made use of the finite dimensional symplectic represenation
as well as the quantum recurrence theorem, and gave a general proof
of previous numerical observations~\cite{RW08}. Here, we obtain
a generalization of this work that only relies on Hamiltonians with
discrete eigenvalues, using the quantum recurrence theorem. Essentially,
such Hamiltonians naturally occur when the relevant system has only
bound states, so this case captures almost all relevant control applications
(see also~\cite{BS08} for a discussion of such Hamiltonians, relevant
to the cases discussed here).

Finally we apply our results to coupled oscillator systems and show
that the methods developed here allow us to substantially extend recent
results on indirect control~\cite{DB08,MG12} to the infinite dimensional
case.

\section{Setup}

For simplicity we will ignore spin degree of freedom and consider
wavefunctions in one spatial dimension only (the generalization is
straightforward). Although much previous work in infinite dimensional
control has been done treating closed loop systems~(see, for instance~\cite{JG09}
and references therein), we are focusing here on open loop controls.
We consider the Hilbert space $\mathcal{H}$ of square-integrable
wavefunctions (i.e., the space $L^{2}(\mathbb{R})$) and a finite
set of self-adjoint operators $\tilde{\mathcal{A}}=\left\{ \tilde{H}_{1},\tilde{H}_{2},\ldots,\tilde{H}_{K}\right\} $
which are assumed to be polynomials in position and momentum. The
set $\mathcal{\tilde{A}}$ describes the \emph{directly implementable
Hamiltonians}, i.e. the set of available Hamiltonians between which
the experimentalist may switch at will.\emph{ }Hence, a control sequence
consists of a list of times $(t_{1},t_{2},\ldots,t_{n}),\, t_{j}\ge0,$
and another of Hamiltonian choices $(k_{1},k_{2},\ldots,k_{n}),\,1\le k_{j}\le K,$
such that from time $0$ to $t_{1},$ the system evolves under $\tilde{H}_{k_{1}},$
from $t_{1}$ to $t_{2}$ under $\tilde{H}_{k_{2}}$ and so on. Note
that the set $\mathcal{\tilde{A}}$ may also describe systems with
a drift Hamiltonian $\tilde{H}_{0},$ e.g. by writing $\tilde{H}_{j}=\tilde{H}_{0}+H'_{j}$,
and that continuous control sequences can be described in a manner
similar to the piece-wise constant case discussed here. Assuming use
of natural units throughout, so that $\hbar=1$, we shall further
simplify notation hereafter by defining the set of skew-adjoint operators
$\mathcal{A}=$$\left\{ H_{1},H_{2},\ldots H_{K}\right\} $, where
$H_{j}=-i\tilde{H}_{j}$ for each $j=1,\ldots,K$.

As for domains, we follow ~\cite{RW06} and consider the Schwarz
space
\begin{equation}
\mathcal{H}_{s}=\left\{ \psi\in\mathcal{H}|\sup_{k,l\ge0}\left\Vert p^{k}q^{l}\psi\right\Vert <\infty\right\} ,
\end{equation}
where $\left\Vert \cdot\right\Vert $ is the Hilbert space norm and
$p$ and $q$ are the momentum and position operators, respectively.
This is a dense subspace of $\mathcal{H}$, on which $\mathcal{\tilde{A}}$
is defined. We will use the following theorems~\cite{RW06}: 
\begin{equation}
\lim_{n\rightarrow\infty}\left(e^{H_{k}\frac{t}{n}}e^{H_{\ell}\frac{t}{n}}\right)^{n}\psi=e^{(H_{k}+H_{\ell})t}\psi,\label{eq:Trotter}
\end{equation}
\begin{equation}
\lim_{n\rightarrow\infty}\left(e^{-H_{k}\frac{t}{n}}e^{-H_{\ell}\frac{t}{n}}e^{H_{k}\frac{t}{n}}e^{H_{\ell}\frac{t}{n}}\right)^{n^{2}}\psi=e^{\left[H_{k},H_{\ell}\right]t^{2}}\psi,\label{eq:Commutator}
\end{equation}
where $H_{k},H_{\ell}\in\mathcal{A}$ and $\psi\in\mathcal{H}_{S}.$
Note that to utilize Eq.~(\ref{eq:Commutator}) we need to be able
to employ $-H_{k}$ and $-H_{l}$ for $H_{k},H_{\ell}\in\mathcal{A}$.
How to do this when $-H_{k,}-H_{l}\notin\mathcal{A}$ is the central
problem addressed in this paper.

The reachable set $\mathcal{R}$ from an initial state $\psi_{0}\in\mathcal{H}_{S}$
is given by the states for which there exists a control such that
they become solutions to the Schrödinger equation with initial condition
$\psi_{0},$ e.g, 
\begin{equation}
\mathcal{R}\left(\psi\right)=\left\{ \prod_{j=1}^{n}e^{H_{j}t_{j}}\psi_{0}|H_{j}\in\mathcal{A},t_{j}\ge0,n\in\mathbb{N}\right\} .
\end{equation}
Similarly, we have the reachable set of unitaries, $\mathcal{U}$,
given by
\begin{equation}
\mathcal{U}=\left\{ \prod_{j=1}^{n}e^{H_{j}t_{j}}|H_{j}\in\mathcal{A},t_{j}\ge0,n\in\mathbb{N}\right\} .
\end{equation}
We define the dynamical Lie algebra $\mathfrak{l}_{s}\equiv\left\langle \mathcal{A}\right\rangle _{\left[\cdot,\cdot\right]}$
of the system as the real Lie algebra generated by $\mathcal{A}$,
i.e. the set of operators one obtains from $\mathcal{A}$ through
real linear combinations and commutators. Note that any element $\tilde{H}\in\mathfrak{l}_{s}/\left(-i\right)$
is a polynomial in position and momentum, and self-adjoint with domain
$D\left(\tilde{H}\right)\supset\mathcal{H}_{s}$.

\section{Controllability}

In this section we will provide some theorems that allow us to characterize
the controllability of infinite dimensional systems through \emph{reachable
sets. }First we use the quantum recurrence theorem to show that if
each element in $\mathcal{A}$ has a discrete spectrum, then irreversibility
can be overcome and with regard to the reachable set from a specific
starting state $\psi_{0}$ we have that $\bar{\mathcal{R}}\left(\psi_{0}\right)\supset e^{\mathfrak{l}_{s}}\psi_{0},$
where $\bar{\mathcal{R}}$ denotes the norm closure of $\mathcal{R}$~%
\footnote{Note that in infinite dimensions the closures of $\mathcal{R}$ and
$\mathcal{U}$ contain unphysical elements. This does not pose a problem
here as we are only interested in physical elements that can be approximated
arbitrarily closely with elements of $\mathcal{R}$ and $\mathcal{U}$.%
}. This result is analogous to the finite dimensional characterization~\cite{DD08}.
Because recurrence times are usually state dependent, such point-wise
characterizations can only be extended to larger sets using further
structure. To this end, we show that for any \emph{compact} set in
Schwarz space the closure of the set of reachable unitaries contains
$e^{\mathfrak{l}_{s}}$. Finally, motivated by physics, we consider
the case that each element in $\mathcal{A}$ has a spectrum of energy
eigenvalues which is bounded below and has no accumulation points.
We show in that case that any set of states with an overall bound
on the \emph{energy expectation values} with respect to $\mathcal{A}$
contains $e^{\mathfrak{l}_{s}}$ in the closure of its reachable set
of unitaries.

A key result employed here is the Quantum Recurrence Theorem~\cite{PB57,DW13}.
So long as our system is evolving under the influence of a constant
Hamiltonian that has a discrete spectrum of energy eigenvalues, then
the theorem tells us that if our system is in the state $\psi\left(T_{k}\right)$
at time $T_{k}$, then as the system evolves in time, there will for
any $\tau,\delta>0$ occur a time $T_{l}>T_{k}+\tau$ such that $\left\Vert \psi\left(T_{k}\right)-\psi\left(T_{l}\right)\right\Vert <\delta$
(where $\psi\left(T_{l}\right)$ is the state at time $T_{l}$). That
is, if one waits long enough from an initial time $T_{k}$ the system
will always return arbitrarily close to the state $\psi\left(T_{k}\right)$
in which it was at time $T_{k}$. Note as well that by the theory
of almost-periodic functions~\cite{HB47} the recurrence implied
by the theorem in fact represents an \emph{infinite} sequence of such
recurrences for any $\delta$.
\begin{thm}
\label{thm:If-each-element}If each element in $\mathcal{A}$ has
a discrete spectrum, then $e^{\mathfrak{l}_{s}}\psi_{0}\subset\bar{\mathcal{R}}\left(\psi_{0}\right)$.\end{thm}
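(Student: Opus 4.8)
The plan is to prove the stronger statement that $e^{Ht}\phi\in\bar{\mathcal{R}}(\psi_0)$ whenever $H\in\mathfrak{l}_s$, $t\in\mathbb{R}$ and $\phi\in\bar{\mathcal{R}}(\psi_0)\cap\mathcal{H}_s$, and then to specialize to $\phi=\psi_0$ (which lies in $\mathcal{R}(\psi_0)\cap\mathcal{H}_s$) and $t=1$. Call $\phi\in\mathcal{H}_s$ \emph{almost reachable} if $\phi\in\bar{\mathcal{R}}(\psi_0)$; note that by~\cite{RW06} every evolution $e^{Ht}$ with $H\in\mathfrak{l}_s$ preserves $\mathcal{H}_s$, so it suffices to track membership in $\bar{\mathcal{R}}(\psi_0)$. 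The one new ingredient beyond the finite-dimensional controllability argument is the ability to realize a \emph{backward} evolution $e^{-H_k s}$ for $H_k\in\mathcal{A}$, $s\ge0$, using only forward-time evolutions; this is precisely where the discreteness of the spectra enters, through the Quantum Recurrence Theorem.

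Forward invariance is immediate: if $\phi=\lim_m\phi_m$ with $\phi_m\in\mathcal{R}(\psi_0)$, then for $H_k\in\mathcal{A}$ and $t\ge0$ the vectors $e^{H_kt}\phi_m$ are again in $\mathcal{R}(\psi_0)$ and converge in norm to $e^{H_kt}\phi$, so $e^{H_kt}\phi$ is almost reachable. For the backward direction, fix an almost-reachable $\phi$, $H_k\in\mathcal{A}$, $s\ge0$, and set $\chi:=e^{-H_ks}\phi$. Because $H_k$ has a discrete spectrum, the Quantum Recurrence Theorem applied to $\chi$ evolving under $e^{H_kt}$ produces, for every $\delta>0$, a time $T>s$ with $\|e^{H_kT}\chi-\chi\|<\delta$, i.e.\ $\|e^{H_k(T-s)}\phi-e^{-H_ks}\phi\|<\delta$ with $T-s\ge0$. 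By forward invariance $e^{H_k(T-s)}\phi\in\bar{\mathcal{R}}(\psi_0)$, so $e^{-H_ks}\phi$ is arbitrarily well approximated by elements of $\bar{\mathcal{R}}(\psi_0)$ and hence lies in it. Thus $e^{H_kt}$ maps almost-reachable vectors to almost-reachable vectors for every $H_k\in\mathcal{A}$ and every $t\in\mathbb{R}$.

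To pass from $\mathcal{A}$ to $\mathfrak{l}_s$, let $\mathcal{L}$ be the set of $H\in\mathfrak{l}_s$ such that $e^{Ht}$ preserves almost reachability for all $t\in\mathbb{R}$. We have just shown $\mathcal{A}\subset\mathcal{L}$, and $\mathcal{L}$ is obviously stable under real scalar multiplication. If $H,H'\in\mathcal{L}$ and $\phi$ is almost reachable, then by Eq.~(\ref{eq:Trotter}) (which holds for all skew-adjoint polynomials in $p,q$, in particular for elements of $\mathfrak{l}_s$, by~\cite{RW06}) $e^{(H+H')t}\phi=\lim_n(e^{Ht/n}e^{H't/n})^n\phi$; every factor $e^{Ht/n}$ or $e^{H't/n}$ maps an almost-reachable vector to an almost-reachable vector (for either sign of $t$, since $H,H'\in\mathcal{L}$), so each of the finitely many partial products stays almost reachable and the limit lies in $\bar{\mathcal{R}}(\psi_0)$; hence $H+H'\in\mathcal{L}$. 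The identical argument applied to Eq.~(\ref{eq:Commutator}) gives $e^{[H,H']t^2}\phi\in\bar{\mathcal{R}}(\psi_0)$ for all $t$, i.e.\ $e^{[H,H']s}\phi\in\bar{\mathcal{R}}(\psi_0)$ for $s\ge0$; using $[H',H]=-[H,H']$ this extends to $s<0$, so $[H,H']\in\mathcal{L}$. Therefore $\mathcal{L}$ is a real Lie subalgebra of $\mathfrak{l}_s$ containing $\mathcal{A}$, forcing $\mathcal{L}=\mathfrak{l}_s$; taking $\phi=\psi_0$ and $t=1$ yields $e^{\mathfrak{l}_s}\psi_0\subset\bar{\mathcal{R}}(\psi_0)$.

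The heart of the argument, and the only place irreversibility is genuinely overcome, is the recurrence step in the second paragraph; the rest is the familiar structural bookkeeping (closure of a Lie algebra under the Trotter and commutator approximations) transplanted to the infinite-dimensional setting. The two facts borrowed from~\cite{RW06} --- that the relevant evolutions leave $\mathcal{H}_s$ invariant, and that the limit formulas~(\ref{eq:Trotter})--(\ref{eq:Commutator}) remain valid for arbitrary skew-adjoint polynomials in $p$ and $q$ --- are what make the inductive composition legitimate. I expect the main subtlety to be exactly this bookkeeping of domains: one must check at each stage of the nested Trotter/commutator limits that the intermediate states remain in $\mathcal{H}_s$, so that the next factor may be applied, and that the recurrence theorem is only ever invoked for the generators $H_k\in\mathcal{A}$, never for the derived elements of $\mathfrak{l}_s$ (which need not have discrete spectra). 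Finally, since the recurrence time $T$ depends on the state $\chi$, the conclusion is inherently point-wise in $\psi_0$; extending it uniformly over larger subsets of $\mathcal{H}$ is what necessitates the compactness and energy-bound hypotheses of the subsequent theorems.
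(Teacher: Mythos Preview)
Your proof is correct and follows essentially the same approach as the paper's: use the Quantum Recurrence Theorem to approximate backward evolutions $e^{-H_k t}$ by forward ones, then invoke the Trotter and commutator limits~(\ref{eq:Trotter})--(\ref{eq:Commutator}) to build up $\mathfrak{l}_s$. Your framing in terms of almost-reachable states and the Lie subalgebra $\mathcal{L}\subset\mathfrak{l}_s$ is in fact more careful than the paper's sketch, as it explicitly handles the iteration needed to pass from single sums and brackets of elements of $\mathcal{A}$ to arbitrary nested elements of $\mathfrak{l}_s$, and makes clear that recurrence is only ever invoked for the generators $H_k\in\mathcal{A}$.
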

\begin{proof}
For any $H_{k}\in\mathcal{A}$, we have $e^{H_{k}t_{k}}\psi_{0}\in\mathcal{R}\left(\psi_{0}\right)\subset\bar{\mathcal{R}}\left(\psi_{0}\right)$
for any $t_{k}\geq0$, by our definition of $\mathcal{R}\left(\psi_{0}\right)$
above. Consider now the case of $e^{-H_{k}t_{k}}\psi_{0}$ for $H_{k}\in\mathcal{A}$,
$t_{k}>0$. Now, because $H_{k}$ has a discrete spectrum we may invoke
the Quantum Recurrence Theorem to show that for any $\delta>0$ there
exists a $t_{k*}\geq0$ such that $\left\Vert \psi_{0}-e^{H_{k}t_{k}}e^{H_{k}t_{k^{*}}}\psi_{0}\right\Vert =\left\Vert \psi_{0}-e^{H_{k}\left(t_{k}+t_{k^{*}}\right)}\psi_{0}\right\Vert <\delta$,
where $t_{k}+t_{k^{*}}>0$. By the unitarity of the exponential operators,
this implies equivalently that $\left\Vert e^{-H_{k}t_{k}}\psi_{0}-e^{H_{k}t_{k^{*}}}\psi_{0}\right\Vert <\delta$.
As $e^{H_{k}t_{k^{*}}}\psi_{0}\in\mathcal{R}\left(\psi_{0}\right)$
for $t_{k^{*}}>0$, it follows that $e^{-H_{k}t_{k}}\psi_{0}\in\bar{\mathcal{R}}\left(\psi_{0}\right)$.
Now, for $H_{k,}H_{l}\in\mathcal{A}$, $t_{k,l}\in\mathbb{R}$, Eqs.~(\ref{eq:Trotter})
and~(\ref{eq:Commutator}) allow us to approximate $e^{\left(H_{k}+H_{l}\right)t_{k,l}}\psi_{0}$
and $e^{\left[H_{k},H_{l}\right]t_{k,l}}\psi_{0}$, respectively,
with an arbitrary degree of accuracy. It follows that $e^{\left(H_{k}+H_{l}\right)t_{k,l}}\psi_{0},e^{\left[H_{k},H_{l}\right]t_{k,l}}\psi_{0}\in\bar{\mathcal{R}}\left(\psi_{0}\right)$.
Combining the above we have that $e^{\mathfrak{l}_{s}}\psi_{0}\subset\bar{\mathcal{R}}\left(\psi_{0}\right)$. 
\end{proof}
The result of Theorem 1 is applicable to any $\psi_{0}\in\mathcal{H}_{S}$,
but in general only to such $\psi_{0}$ considered individually. In
the context of quantum computing this is especially problematic, firstly
because we would like in general to implement unitary operations not
just on individual states but across subsets of the Hilbert space,
and secondly because we may not in any case be able to possess perfect
information about the initial state. The ability to perform a unitary
operation reliably over a set of states close to the one of interest
thus provides a stability condition for the controllability of the
system. In the following two theorems we extend the result to larger
subsets of $\mathcal{H}_{S}$. In both cases the result depends on
the details of the proof of the Quantum Recurrence Theorem, outlined
below. The point is that while recurrence times usually depend on
the state (making the control sequence state-dependent in turn), under
the extra condition of compactness or finite energy expectation state-independent
recurrence times, and thus state-independent control sequences, may
be found.

As a solution to the Schrödinger equation at time $T_{k}$, with a
constant discrete-spectrum Hamiltonian, the state $\psi\left(T_{k}\right)$
may be expanded as the infinite sum $\psi\left(T_{k}\right)=\sum_{n=0}^{\infty}c_{n}e^{-iE_{n}T_{k}}\phi_{n}$,
where the $c_{n}$ are complex coefficients depending on $\psi\left(T_{k}\right)$
(with $\sum_{n=0}^{\infty}\vert c_{n}\vert^{2}=1$), and $\phi_{n}$
are the eigenstates corresponding to the Hamiltonian's discrete energy
eigenvalues $E_{n}$. Assuming the Hamiltonian remains constant, a
state $\psi\left(T_{l}\right)$ at time $T_{l}>T_{k}$ may be similarly
expanded as $\psi\left(T_{l}\right)=\sum_{n=0}^{\infty}c_{n}e^{-iE_{n}T_{l}}\phi_{n}$,
and then it follows that
\[
\left\Vert \psi\left(T_{k}\right)-\psi\left(T_{l}\right)\right\Vert ^{2}=2\sum_{n=0}^{\infty}\vert c_{n}\vert^{2}\left(1-\cos\left(E_{n}\tilde{T}\right)\right)
\]
 where $\tilde{T}=T_{l}-T_{k}$. Now, because $\sum_{n=0}^{\infty}\vert c_{n}\vert^{2}=1$,
it is possible for any $\delta>0$ to choose an $N\in\mathbb{N}$
such that
\begin{equation}
\sum_{n=N+1}^{\infty}\vert c_{n}\vert^{2}<\frac{\delta^{2}}{8}.\label{eq:QRThFact2}
\end{equation}
 Furthermore, by the theory of almost-periodic functions~\cite{HB47}
it will always be possible to find a value of $\tilde{T}$ (or indeed
an infinite sequence of such values) such that for any $\delta>0$
\begin{equation}
\sum_{n=0}^{N}\left(1-\cos\left(E_{n}\tilde{T}\right)\right)<\frac{\delta^{2}}{4}.\label{eq:QRThFact3}
\end{equation}
Combining these facts, with $N$ suitably chosen in Eq.~(\ref{eq:QRThFact2})
and $\tilde{T}$ suitably chosen in Eq.~(\ref{eq:QRThFact3}), gives
us
\begin{eqnarray*}
\lefteqn{\left\Vert \psi\left(T_{k}\right)-\psi\left(T_{l}\right)\right\Vert ^{2}}\\
 & = & 2\sum_{n=0}^{\infty}\vert c_{n}\vert^{2}\left(1-\cos\left(E_{n}\tilde{T}\right)\right)\\
 & \leq & 2\sum_{n=0}^{N}\left(1-\cos\left(E_{n}\tilde{T}\right)\right)+4\sum_{n=N+1}^{\infty}\vert c_{n}\vert^{2}\\
 & < & \frac{\delta^{2}}{2}+\frac{\delta^{2}}{2}=\delta^{2}
\end{eqnarray*}
from which it follows that $\left\Vert \psi\left(T_{k}\right)-\psi\left(T_{l}\right)\right\Vert <\delta$,
as desired. Note that in Eq.~(\ref{eq:QRThFact3}) the choice of
$\tilde{T}$ depends only on the $E_{n}$, which are in turn dependent
only on the Hamiltonian; i.e., the choice of $\tilde{T}$ is \emph{independent}
of the particular system state vectors under consideration.
\begin{thm}
For any compact set $X\subset\mathcal{H}_{S}$, $e^{\mathfrak{l}_{s}}\subset\bar{\mathcal{U}}$.\end{thm}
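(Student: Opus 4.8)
The plan is to re-run the proof of Theorem~\ref{thm:If-each-element}, but to make every time-reversal step uniform over $X$, so that one control sequence realizes a given $U\in e^{\mathfrak l_s}$ to within any prescribed accuracy on \emph{all} of $X$ at once; here $\bar{\mathcal U}$ is the closure of $\mathcal U$ for uniform convergence on compact subsets of $\mathcal H_S$, and since ``for any compact $X$'' is the quantifier it suffices to treat one generic compact $K\subset\mathcal H_S$. The only substantial new ingredient is a \emph{uniform recurrence lemma}: for each $H_k\in\mathcal A$ (which we still assume has discrete spectrum), each compact $K$, and each $\delta>0$ there exist arbitrarily large $\tilde T>0$ with $\sup_{\psi\in K}\|\psi-e^{H_k\tilde T}\psi\|<\delta$. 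This is exactly the estimate displayed just before the theorem, with the single change that the truncation level $N$ of Eq.~(\ref{eq:QRThFact2}) be chosen \emph{independently} of $\psi$. That is possible precisely because $K$ is compact, hence totally bounded, in $\mathcal H$: covering $K$ by finitely many small balls and using $\|I-P_N\|\le1$, with $P_N$ the spectral projection of $H_k$ onto its first $N+1$ eigenvectors, forces $\sup_{\psi\in K}\sum_{n>N}|c_n(\psi)|^2\to0$ as $N\to\infty$. Once $N$ is fixed, the choice of $\tilde T$ in Eq.~(\ref{eq:QRThFact3}) depends only on $N$ and the $E_n$, and the almost-periodicity of $\tilde T\mapsto\sum_{n\le N}(1-\cos E_n\tilde T)$ supplies such $\tilde T$ of unbounded size.

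From the lemma, for every $H_k\in\mathcal A$ and $t\in\mathbb R$ one gets $e^{H_k t}\in\bar{\mathcal U}$: for $t\ge0$ it already lies in $\mathcal U$, and for $t<0$ one picks a uniform recurrence time $\tilde T\ge|t|$ and uses $\sup_{\psi\in K}\|e^{H_k t}\psi-e^{H_k(\tilde T-t)}\psi\|=\sup_{\psi\in K}\|\psi-e^{H_k\tilde T}\psi\|<\delta$ with $e^{H_k(\tilde T-t)}\in\mathcal U$. Two soft facts then carry the structural part. First, $\bar{\mathcal U}$ is stable under the compositions arising below: approximating factors successively on the appropriate compact images of $K$ and using that the outer operators, being elements of $e^{\mathfrak l_s}$, are isometric and $\mathcal H_S$-preserving (so the inner error can be discarded), one sees that a finite product of operators each lying in $e^{\mathfrak l_s}\cap\bar{\mathcal U}$ again lies in $\bar{\mathcal U}$. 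Second, the pointwise identities Eqs.~(\ref{eq:Trotter})--(\ref{eq:Commutator}), valid for arbitrary $H,H'\in\mathfrak l_s$ on $\mathcal H_S$ (their generators being self-adjoint polynomials in $p,q$), in fact converge \emph{uniformly on each compact $K$}, since every approximant $(e^{Ht/n}e^{H't/n})^n$ and $(e^{-Hs/n}e^{-H's/n}e^{Hs/n}e^{H's/n})^{n^2}$ is unitary, hence $1$-Lipschitz, and a pointwise limit of a uniformly $1$-Lipschitz family on a compact set is automatically uniform.

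The final step is a Lie-algebraic bootstrap. Put $S=\{H\in\mathfrak l_s:\ e^{tH}\in\bar{\mathcal U}\ \text{for all}\ t\in\mathbb R\}$; the previous paragraph gives $\mathcal A\subset S$, and $S$ is trivially closed under real scalar multiples. If $H,H'\in S$, then $e^{(H+H')t}$ is the uniform-on-$K$ limit of $(e^{Ht/n}e^{H't/n})^n$, a product of elements of $e^{\mathfrak l_s}\cap\bar{\mathcal U}$, and hence lies in the topologically closed set $\bar{\mathcal U}$, so $H+H'\in S$; likewise, for $t\ge0$, $e^{[H,H']t}$ is the limit of the corresponding commutator products with $s=\sqrt t$, and the case $t<0$ follows by exchanging $H$ and $H'$, so $[H,H']\in S$. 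Thus $S$ is a Lie subalgebra of $\mathfrak l_s$ containing $\mathcal A$, whence $S=\langle\mathcal A\rangle_{[\cdot,\cdot]}=\mathfrak l_s$; taking $t=1$ yields $e^{\mathfrak l_s}\subset\bar{\mathcal U}$, which for the given $X$ is the assertion. The one genuinely delicate point is the uniform recurrence lemma, and within it the uniform tail estimate $\sup_{\psi\in K}\|(I-P_N)\psi\|\to0$ — exactly the place where compactness of $X$, rather than mere boundedness, is indispensable; the remaining steps are a verbatim rerun of Theorem~\ref{thm:If-each-element} or soft, essentially algebraic or equicontinuity-type arguments.
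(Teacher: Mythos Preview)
Your approach is essentially the paper's: both extract a finite $\varepsilon/3$-net from compactness of $X$, choose the truncation level $N$ uniformly over the net centers, and then invoke almost-periodicity for a single state-independent recurrence time $\tilde T$; you are in fact more careful than the paper in arguing uniform-on-$K$ convergence of the Trotter and commutator limits (via the $1$-Lipschitz/equicontinuity observation) and in closing $\bar{\mathcal U}$ under products and the full Lie-algebraic bootstrap, steps the paper only gestures at. One harmless arithmetic slip: for $t<0$ the forward time should be $\tilde T+t$ rather than $\tilde T-t$, so that after multiplying by the isometry $e^{-H_kt}$ one indeed obtains $\|\psi-e^{H_k\tilde T}\psi\|$.
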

\begin{proof}
For $H_{k}\in\mathcal{A}$, $t_{k}\geq0$, we have $e^{H_{k}t_{k}}\in\mathcal{U}\subset\bar{\mathcal{U}}$.
Now consider the unitary $e^{-H_{k}t_{k}}$ for $H_{k}\in\mathcal{A},t_{k}>0$.
Let $\varepsilon>0$ be given, and let $\mathcal{B}=\left\{ B\left(\psi\right)|\psi\in X\right\} $,
where $B\left(\psi\right)=\left\{ \phi\in\mathcal{H_{S}}|\left\Vert \phi-\psi\right\Vert <\frac{\varepsilon}{3}\right\} $.
This is an open covering of $X$, so by compactness we may choose
a finite subcovering $\tilde{\mathcal{B}}=\left\{ B\left(\psi_{i}\right)|B\left(\psi_{i}\right)\in\mathcal{B},i=1,\ldots,q\right\} $.
Let $\delta=\frac{\varepsilon}{3}$ in Eqs.~(\ref{eq:QRThFact2})
and (\ref{eq:QRThFact3}) above. For each $\psi_{i}$ we may choose
an $N_{i}\in\mathbb{N}$ satisfying (\ref{eq:QRThFact2}); let $N=\max\left\{ N_{i}:i=1,\ldots,q\right\} $.
Having made this choice of $N$, we may choose a $\tilde{T}=t_{k}+t_{k^{*}}>0$
(with $t_{k^{*}}\geq0$) satisfying (\ref{eq:QRThFact3}), where the
choice of $\tilde{T}$ depends only on the energy eigenvalues of $\tilde{H}_{k}$.
With this choice of $N$ and $\tilde{T}$, it follows that $\left\Vert \psi_{i}-e^{H_{k}\left(t_{k}+t_{k^{*}}\right)}\psi_{i}\right\Vert <\frac{\varepsilon}{3}$,
or equivalently that $\left\Vert e^{-H_{k}t_{k}}\psi_{i}-e^{H_{k}t_{k^{*}}}\psi_{i}\right\Vert <\frac{\varepsilon}{3}$,
for all $\psi_{i},i=1,\ldots,q$. Now consider an arbitrary $\psi\in X$.
Noting that every such $\psi$ is contained in some $B\left(\psi_{i}\right)$,
we have that:

\begin{eqnarray*}
\lefteqn{\left\Vert e^{-H_{k}t_{k}}\psi-e^{H_{k}t_{k^{*}}}\psi\right\Vert }\\
 & \leq & \left\Vert e^{-H_{k}t_{k}}\psi-e^{-H_{k}t_{k}}\psi_{i}\right\Vert +\left\Vert e^{-H_{k}t_{k}}\psi_{i}-e^{H_{k}t_{k^{*}}}\psi_{i}\right\Vert \\
 &  & +\left\Vert e^{H_{k}t_{k^{*}}}\psi_{i}-e^{H_{k}t_{k^{*}}}\psi\right\Vert \\
 & = & \left\Vert \psi-\psi_{i}\right\Vert +\left\Vert e^{-H_{k}t_{k}}\psi_{i}-e^{H_{k}t_{k^{*}}}\psi_{i}\right\Vert +\left\Vert \psi_{i}-\psi\right\Vert \\
 & < & \frac{\varepsilon}{3}+\frac{\varepsilon}{3}+\frac{\varepsilon}{3}=\varepsilon.
\end{eqnarray*}
Thus we see that $e^{H_{k}t_{k}}\in\bar{\mathcal{U}}$. Now, noting
that neither Eq.~(\ref{eq:Trotter}) nor (\ref{eq:Commutator}) depend
on the system state vector $\psi$ under question, then for $H_{k,}H_{l}\in\mathcal{A}$
and $t_{k,l}\in\mathbb{R}$ we will have $e^{\left(H_{k}+H_{l}\right)t_{k,l}},e^{\left[H_{k,}H_{l}\right]t_{k,l}}\in\bar{\mathcal{U}}$,
for reasons analogous to those in the proof of Theorem 1. Combining
the above we have $e^{\mathfrak{l}_{s}}\subset\bar{\mathcal{U}}$.
\end{proof}
Finally we consider the case of states with bounded energy expectation
value.
\begin{thm}
If each element in $\tilde{\mathcal{A}}$ has a discrete spectrum
of energy eigenvalues which is bounded below and has no accumulation
points, then for any set of states $X\subset\mathcal{H}_{S}$ with
an overall bound on the energy expectation values with respect to
$\mathcal{\tilde{A}}$, $e^{\mathfrak{l}_{s}}\subset\bar{\mathcal{U}}$.\end{thm}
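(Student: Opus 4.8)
The plan is to mimic the proof of Theorem~2, replacing the compactness step — which was used only to extract a state-independent cutoff $N$ in Eq.~(\ref{eq:QRThFact2}) via a finite subcover — by a Markov-type estimate that pulls such a cutoff out of the energy bound instead. Fix $\varepsilon>0$ and put $\delta=\varepsilon$. As in the discussion preceding Theorem~2, for a fixed $H_k\in\mathcal{A}$ and any $\psi\in X$ expand $\psi=\sum_n c_n\phi_n$ in the eigenbasis of $\tilde H_k$; then $\left\Vert\psi-e^{H_k\tilde T}\psi\right\Vert^2=2\sum_n|c_n|^2\bigl(1-\cos(E_n\tilde T)\bigr)$, and it suffices to (i) find $N$ with $\sum_{n>N}|c_n|^2<\delta^2/8$ and (ii) find $\tilde T$ with $\sum_{n=0}^N\bigl(1-\cos(E_n\tilde T)\bigr)<\delta^2/4$, with both choices independent of $\psi$.

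For step (i) I would use the energy hypothesis. Since the spectrum of $\tilde H_k$ is bounded below, order $E_0\le E_1\le\cdots$; then $E_n-E_0\ge0$ and $\sum_n|c_n|^2(E_n-E_0)=\langle\psi|\tilde H_k|\psi\rangle-E_0\le M_k-E_0$, where $M_k:=\sup_{\psi\in X}\langle\psi|\tilde H_k|\psi\rangle<\infty$ by the ``overall bound on energy expectation values with respect to $\tilde{\mathcal A}$'' (finite on $\mathcal{H}_S$ because $\tilde H_k$ is a polynomial in $p,q$; and $K$ is finite, so only finitely many $M_k$ are involved). Markov's inequality then gives, for every $\lambda>0$, $\sum_{n:\,E_n-E_0>\lambda}|c_n|^2\le(M_k-E_0)/\lambda$, uniformly over $\psi\in X$. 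Choosing $\lambda>8(M_k-E_0)/\delta^2$ and invoking the absence of accumulation points together with boundedness below — so that only finitely many $E_n$ lie in $[E_0,E_0+\lambda]$ — yields a single integer $N_k$, depending only on $\tilde H_k$ and $\delta$, with $\sum_{n>N_k}|c_n|^2<\delta^2/8$ for all $\psi\in X$. Taking $N=\max_k N_k$ gives the desired uniform cutoff. (If some eigenvalue has infinite multiplicity, I would run the same argument on the distinct eigenvalues $\lambda_0<\lambda_1<\cdots$ with grouped weights $P_j=\sum_{E_n=\lambda_j}|c_n|^2$, which sidesteps the index ambiguity.)

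With $N$ fixed, step (ii) is exactly the almost-periodicity argument from the paragraph preceding Theorem~2: there is a $\tilde T$ (indeed an infinite sequence of them, so one may be taken larger than any prescribed $t_k$) with $\sum_{n=0}^N\bigl(1-\cos(E_n\tilde T)\bigr)<\delta^2/4$, and this $\tilde T$ depends only on $E_0,\dots,E_N$, hence only on $\tilde H_k$. Combining (i) and (ii), writing $\tilde T=t_k+t_{k^*}$ with $t_{k^*}\ge0$, we get $\left\Vert\psi-e^{H_k(t_k+t_{k^*})}\psi\right\Vert<\delta=\varepsilon$, equivalently $\left\Vert e^{-H_kt_k}\psi-e^{H_kt_{k^*}}\psi\right\Vert<\varepsilon$, for all $\psi\in X$; hence $e^{-H_kt_k}\in\bar{\mathcal U}$. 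Finally, since Eqs.~(\ref{eq:Trotter}) and~(\ref{eq:Commutator}) are state-independent, sums, commutators and their iterates of elements of $\mathcal{A}$ also lie in $\bar{\mathcal U}$, exactly as in Theorem~1, so $e^{\mathfrak{l}_s}\subset\bar{\mathcal U}$.

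The main obstacle is step (i): without compactness the uniform cutoff must come entirely from the energy bound, and the two facts that make this work are (a) a bound on $\langle\tilde H_k\rangle$ plus boundedness below controls the tail weight $\sum_{E_n-E_0>\lambda}|c_n|^2$ uniformly over $X$ by Markov's inequality, and (b) the no-accumulation-points condition converts an energy threshold into a mode-index threshold, so that ``small energy tail'' genuinely means ``few modes.'' The only other thing to check is the mild point that $\langle\psi|\tilde H_k|\psi\rangle$ is finite on $\mathcal{H}_S$ and that the hypothesis is read as a finite bound $M_k$ for each of the finitely many $\tilde H_k\in\tilde{\mathcal A}$; everything after step (i) is a routine repetition of the proofs of Theorems~1 and~2.
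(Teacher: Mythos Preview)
Your proposal is correct and follows essentially the same route as the paper: both replace the compactness-based uniform cutoff of Theorem~2 by a Markov-type tail estimate derived from the uniform energy bound, using boundedness below and the absence of accumulation points to turn an energy threshold into a mode-index threshold, and then invoke almost-periodicity for the finitely many remaining modes exactly as before. The paper simply shifts so that $E_0\ge0$ and writes the Markov step as $E_{N+1}\sum_{n>N}|c_n|^2\le\sum_{n>N}|c_n|^2E_n<M$, which is your inequality in slightly different dress; your extra remarks on infinite multiplicity and on taking $N=\max_k N_k$ are harmless (the latter is in fact unnecessary, since $k$ is fixed throughout the argument for $e^{-H_kt_k}$).
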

\begin{proof}
Again, $e^{H_{k}t_{k}}\in\mathcal{U}\subset\bar{\mathcal{U}}$ for
$H_{k}\in\mathcal{A}$, $t_{k}\geq0$, but we must consider the case
for $e^{-H_{k}t_{k}}$ with $H_{k}\in\mathcal{A}$, $t_{k}>0$. For
$H_{k}$ with energy eigenvalues bounded below, we may organize this
set of eigenvalues $\left\{ E_{0},E_{1},E_{2},\ldots,E_{n},\ldots\right\} $
in ascending order, i.e. such that $E_{0}\leq E_{1}\leq E_{2}\leq\ldots\leq E_{n}\leq\ldots$
for all $n\in\mathbb{N}$, where we assume with loss of generality
that $E_{0}\geq0$. For an infinite dimensional system this boundedness
of the eigenvalues below, combined with the lack of accumulation points,
also gives us that $E_{n}\rightarrow\infty$ as $n\rightarrow\infty$,
while by the bound on the energy expectation value we have$\left\langle \tilde{H}_{k}\right\rangle _{\psi}<M$
for all $\tilde{H}_{k}\in\tilde{\mathcal{A}}$, $\psi\in X$, and
for some finite $M\in\mathbb{R}$, $M>0$ (with $c_{n},E_{n}$ defined
as above). Let $\delta>0$ be given, and choose finite $N\in\mathbb{N}$
such that $E_{N+1}\geq\frac{8M}{\delta^{2}}$ (noting that the lack
of accumlation points makes such a choice possible). Since $E_{N+1}\leq E_{N+2}\leq\ldots$,
we have that 
\begin{eqnarray*}
\lefteqn{\sum_{n=N+1}^{\infty}\left|c_{n}\right|^{2}E_{N+1}=E_{N+1}\sum_{n=N+1}^{\infty}\left|c_{n}\right|^{2}}\\
 & \le & \sum_{n=N+1}^{\infty}\left|c_{n}\right|^{2}E_{n}\le\sum_{n=0}^{\infty}\left|c_{n}\right|^{2}E_{n}<M,
\end{eqnarray*}
from which it follows that $\sum_{n=N+1}^{\infty}\left|c_{n}\right|^{2}<\frac{M}{E_{N+1}}\leq\frac{\delta^{2}}{8}.$
Note that although this choice of $N$ satisfies Eq.~(\ref{eq:QRThFact2})
for every $\psi\in X$, the choice depends only on the energy eigenvalues
of $\tilde{H}_{k}$. Having thus chosen $N$, we may then make a choice
of $\tilde{T}=t_{k}+t_{k^{*}}$ (with $t_{k^{*}}\geq0$) satisfying
Eq.~(\ref{eq:QRThFact3}), where the choice of $\tilde{T}$ also
depends only on the energy eigenvalues of $\tilde{H}_{k}$. With this
$N$ and $\tilde{T}$, it follows that for all $\psi\in X$ we have
$\left\Vert \psi-e^{H_{k}\left(t_{k}+t_{k^{*}}\right)}\psi\right\Vert <\delta$,
or equivalently $\left\Vert e^{-H_{k}t_{k}}\psi-e^{H_{k}t_{k^{*}}}\psi\right\Vert <\delta$,
and therefore that $e^{-H_{k}t_{k}}\in\bar{\mathcal{U}}$. As in the
proof of Theorem 2, we will now have $e^{\left(H_{k}+H_{l}\right)t_{k,l}},e^{\left[H_{k,}H_{l}\right]t_{k,l}}\in\bar{\mathcal{U}}$
for $H_{k},H_{l}\in\mathcal{A}$ and $t_{k,l}\in\mathbb{R}$. Combining
the above we have $e^{\mathfrak{l}_{s}}\subset\bar{\mathcal{U}}$.
\end{proof}
We remark that the same argument to obtain universal recurrence times
was made independently in a different context by Wallace~\cite{DW13}.
For a consideration of recurrence in the presence of Hamiltonians
with discrete spectra, in the context of proving approximate full
controllability in certain finite and infinite dimensional systems,
see also \cite{TC08}.

\section{Example}

As an example we consider a generalization of~\cite{SL99} where
it was shown that cubic terms can generate polynomials of arbitrary
order. We will be brief in this presentation; details may be found
at \cite{DB08}. Let us look at a system of $N$ coupled harmonic
oscillators which are interacting through the Hamiltonian $\sum_{i,j}a_{ij}H_{ij}$,
$a_{ij}\ge0,$ where the interaction is given by coupled oscillators
with spring constant $\omega$ in the rotating wave approximation

\begin{equation}
H_{ij}=p_{i}^{2}+q_{i}^{2}+p_{j}^{2}+q_{j}^{2}+\omega\left(p_{i}-p_{j}\right)^{2}+\omega\left(q_{i}-q_{j}\right)^{2},
\end{equation}
and the coupling strengths $a_{ij}$ determine the geometry of the
system. If we denote the local Lie algebra of all skew-adjoint polynomials
on one oscillator $i$ by $\mathfrak{l}_{i}$, and the Lie algebra
of skew-adjoint polynomials on two oscillators $i$ and $j$ by $\mathfrak{l}_{ij}$,
then it follows easily from the canonical commutation relationships
that 
\begin{equation}
\left\langle \mathfrak{l}_{i},\left[\mathfrak{l}_{i},iH_{ij}\right]\right\rangle _{[\cdot,\cdot]}=\mathfrak{l}_{ij},
\end{equation}
i.e. $\mathfrak{l}_{ij}$ is the smallest Lie algebra that contains
$\mathfrak{l}_{i}$ and elements of the form $\left[\mathfrak{l}_{i},iH_{ij}\right]$.
This property is known as algebraic propagation~\cite{DB08} and
implies that an easy criterion can be used to show that the system
algebra $\mathfrak{l}_{s}$ is the complete one defined by the couplings
$a_{ij}$ and the controls. Using the fact that $H$ has a discrete
spectrum and our Theorem~\ref{thm:If-each-element}, we find that
a chain of oscillators controlled on one end is fully state controllable.

\section{Conclusions}

We have extended the well-known characterizations of reachable sets
in finite dimensions to a large and physically relevant class of infinite
dimensional systems. Our result raises many interesting issues. Firstly,
while in finite dimensions control sequences can in principle be computed
numerically, these algorithms become less and less efficient with
higher dimensions. While in some cases, going to infinite dimensions
will \emph{simplify }the numerics~\cite{RW08}, we expect that most
control pulses are \emph{uncomputable} with classical computers. Secondly,
it is a long-standing open problem~\cite{DD09} to obtain useful
bounds on \emph{how long }it takes to achieve a reachable operation.
While numerical examples in finite dimension often yield short times
in practice, we expect that the worst-case examples in finite dimensions
can scale as badly as recurrence times - reaching the lifetime of
the universe easily. It would be interesting to see how this changes
in infinite dimensions, and we conjecture that by introducing physical
constraints - such as finite energy expectations of the target states
- one can still find many practical controls (see also~\cite{SM12}).

\emph{Acknowledgements:} DB would like to thank M. Genoni, A. Serafini
and M. S. Kim for fruitful discussions about potential generalizations
of~\cite{MG12}. RSB would like to thank J. Gough, R. Gohm and D.
G. Evans for helpful discussions and suggestions on matters mathematical.

\end{document}